%% file: main.tex
\documentclass[11pt]{article}

\usepackage[margin=1in]{geometry}
\usepackage{amsmath,amssymb,amsthm}
\usepackage{booktabs,tabularx,array}
\usepackage{microtype}
\usepackage{tikz}
\usetikzlibrary{arrows.meta,backgrounds}
\usepackage[colorlinks=true,linkcolor=blue,citecolor=blue,urlcolor=blue]{hyperref}
\usepackage{caption}
\usepackage{enumitem}
\usepackage{float}
\usepackage{placeins}

\input{macros}
\hypersetup{
  pdftitle={A Certified Lower Bound for Lebesgue's Universal Cover Problem},
  pdfauthor={Niantao Xie},
  pdfsubject={Finite certificate for a convex universal-cover lower bound},
  pdfkeywords={Lebesgue universal cover problem, convex geometry, finite certificate, interval arithmetic}
}

\renewcommand{\arraystretch}{1.14}

\newcolumntype{Y}{>{\raggedright\arraybackslash}X}
\newcolumntype{L}[1]{>{\raggedright\arraybackslash}p{#1}}
\newcolumntype{Q}[1]{>{\raggedleft\arraybackslash}p{#1}}
\setlist[itemize]{topsep=3pt,itemsep=2pt,parsep=0pt,leftmargin=2.2em}
\setlist[enumerate,1]{label=(\roman*),leftmargin=2.8em,labelwidth=2.1em,labelsep=0.6em,align=left,itemindent=0pt,topsep=3pt,itemsep=2pt,parsep=0pt}
\allowdisplaybreaks[2]
\newtheorem{proposition}{Proposition}
\newtheorem{theorem}{Theorem}
\newtheorem{corollary}{Corollary}
\newtheorem{lemma}{Lemma}
\newtheorem{definition}{Definition}

\title{A Certified Lower Bound for Lebesgue's Universal Cover Problem}
\author{Niantao Xie}
\date{}

\begin{document}
\maketitle

\begin{abstract}
Lebesgue's universal cover problem asks for the infimum of the areas of planar sets that contain a congruent copy of every planar set of diameter at most one.  We work in the convex Brass--Sharifi three-test-set framework, where the test sets are a closed disk, an equilateral triangle, and a regular pentagon of diameter one.  For each normalized placement \(\paramv\), let \(\Area(\paramv)\) denote the area of the convex hull of these three test sets.  We construct a finite certificate proving
\[
\Area(\paramv)\ge 0.833
\]
throughout the admissible normalized domain.  The threshold \(0.833\) yields a certified improvement over the Brass--Sharifi lower bound \(0.832\) within the same convex three-test-set framework.

The proof is a finite-cover argument.  The admissible domain is covered by finitely many parameter domains, and each domain carries a local lower-bound certificate.  The non-witness domains are certified by supporting local records.  On the witness domains, the local bound is obtained from an inner-witness polygon construction.  The witness points lie in the three test sets and determine an ordered polygonal region certified to be simple and positively oriented.  Its area is bounded below by interval orientation and shoelace estimates.  Since this certified polygonal region lies inside the corresponding convex hull, its area gives a lower bound for the hull area.  Combining the local inequalities with the finite cover yields
\[
\acvx\ge 0.833,
\]
where \(\acvx\) is the infimum of the areas of convex universal covers.
\end{abstract}

\section{Introduction}

A planar set is a universal cover if it contains a congruent copy of every planar set of diameter at most one.  In this paper the covering set is required to be convex.  Let
\[
\Ucvx=
\left\{K\subset\R^2:
\begin{array}{l}
K\text{ is convex, and for every }S\subset\R^2\text{ with }\diam(S)\le 1,\\
\text{there is a Euclidean isometry }g\text{ such that }g(S)\subseteq K
\end{array}
\right\}.
\]
Set
\[
\acvx=\inf_{K\in\Ucvx}\area(K).
\]
Here and below, \(\area\) denotes planar Lebesgue area.  All lower-bound statements below concern this convex quantity.  For background on the general universal cover problem, see Baez, Bagdasaryan, and Gibbs~\cite{baez-bagdasaryan-gibbs}; a small-area convex covering construction is described by Gibbs~\cite{gibbs}.

Brass and Sharifi introduced the three-test-set framework used here and obtained the lower bound \(0.832\) in this convex setting~\cite{brass-sharifi}.  For comparison on the upper-bound side, Gibbs constructed a convex universal cover of area \(0.8440935944\)~\cite{gibbs}.  The test sets are a closed disk, an equilateral triangle, and a regular pentagon of diameter one.  Their normalization turns the relative placement problem into a finite-dimensional problem for the area of the convex hull of these three sets.  The present work certifies the threshold
\[
\target=0.833
\]
within the same normalized convex setting.  We regard \(0.833\) as an exact decimal threshold.

The absolute increment from \(0.832\) to \(0.833\) is \(0.001\).  Relative to the interval between the Brass--Sharifi lower bound \(0.832\) and the convex upper bound constructed by Gibbs, \(0.8440935944\), it closes approximately \(8.3\%\) of that gap.  The finite certificate gives a reproducible finite-record verification of the threshold.  Its records include supporting local certificates for the non-witness domains, explicit witness-domain polygon certificates, interval orientation and shoelace lower endpoints, and a final finite-cover aggregation.  The verification code replays these finite checks deterministically.

The structure of the proof is as follows.  First, a convex universal cover contains congruent copies of the three test sets, and convexity forces it to contain their convex hull.  Second, the Brass--Sharifi normalization represents the hull by a parameter \(\paramv\) in an admissible domain \(\Oadm\).  Third, the certificate covers \(\Oadm\) by finitely many domains and verifies a local lower bound on each one.  Fourth, the finite-cover implication gives \(\Area(\paramv)\ge\target\) throughout \(\Oadm\), and taking the infimum over convex universal covers gives \(\acvx\ge\target\).

\emph{Outline.} Section~2 defines the normalized three-test-set problem.  Section~3 states the finite certificate model.  Section~4 explains the sources of local lower bounds.  Section~5 gives the inner-witness polygon construction.  Section~6 describes the interval orientation and shoelace estimates.  Section~7 proves the certified lower-bound theorem and the convex consequence.  Section~8 concludes and records the scope of the statement.  Appendices~A and~B summarize the finite certificate data used by the proof.

\section{The normalized three-test-set problem}

\subsection{The three test sets}

Let \(\Cdisc\), \(\Ttri\), and \(\Pfive\) be, respectively, fixed reference copies of a closed disk, an equilateral triangle, and a regular pentagon of diameter one.  The disk has radius \(1/2\).  The reference disk is centered at the origin.  The reference triangle and pentagon have fixed reference positions and orientations.  The three sets \(\Cdisc,\Ttri,\Pfive\) are the necessary test sets used in the lower-bound argument: every universal cover contains a congruent copy of each of them.  Thus, if \(K\in\Ucvx\), then after choosing one congruent copy of each test set inside \(K\), convexity of \(K\) implies that \(K\) contains the convex hull of those three chosen copies.

\subsection{Normalized placements}

Following the Brass--Sharifi normalization, the disk is fixed at the origin and the orientation of the triangle is fixed.  The regular pentagon is taken in a fixed reference orientation; its rotational symmetry permits the pentagon angle to be recorded in a fundamental interval.  Let \(R_\rho\) denote counterclockwise rotation by angle \(\rho\) about the origin.  Let
\[
\paramv=(\rho,x_3,y_3,x_5,y_5),\qquad
\transTri=(x_3,y_3),\qquad
\transPent=(x_5,y_5).
\]
Here \(\transTri\) is the triangle translation, \(\rho\) is the pentagon rotation, and \(\transPent\) is the pentagon translation.

Set
\[
X(\paramv)=\Cdisc\cup(\Ttri+\transTri)\cup(R_\rho\Pfive+\transPent),
\qquad
\Hull(\paramv)=\conv\bigl(X(\paramv)\bigr),
\]
and set
\[
\Area(\paramv)=\area(\Hull(\paramv)).
\]
Here \(\conv\) denotes the closed convex hull.  Thus \(\Area\) denotes the hull-area function, while \(\area\) denotes planar area.  Figure~\ref{fig:placement} illustrates the normalized placement.

\input{figures/fig_placement}
\FloatBarrier

\subsection{The admissible domain and normalization principle}

Denote by \(\Oadm\) the admissible normalized placement domain in the Brass--Sharifi framework.  This is a recorded subset of the five-dimensional parameter space with coordinates \((\rho,x_3,y_3,x_5,y_5)\).  Its points represent normalized relative placements satisfying the Brass--Sharifi admissibility conditions after the normalization and symmetry reductions used for the three test sets.

\paragraph{Input normalization principle.}
The present paper uses the Brass--Sharifi normalization principle~\cite{brass-sharifi} as an input theorem and treats \(\Oadm\) as the input domain for the finite certificate.  It does not rederive the full symbolic reduction from the unreduced placement space.  For every \(K\in\Ucvx\), after applying a global Euclidean isometry if necessary, there exists \(\paramv\in\Oadm\) such that
\[
\Hull(\paramv)\subseteq K.
\]
Consequently, a uniform lower bound for \(\Area(\paramv)\) on \(\Oadm\) gives a lower bound for \(\area(K)\) for every convex universal cover \(K\).  This is the only point at which the Brass--Sharifi normalization enters the final universal-cover consequence.  Thus the finite certificate begins after the normalization and symmetry reductions of Brass--Sharifi have been accepted as input.  The finite cover of \(\Oadm\) is introduced in Section~\ref{sec:certificate-model}, and the certificate proves
\[
\Area(\paramv)\ge\target\qquad(\paramv\in\Oadm).
\]
The convex universal-cover consequence is proved in Section~\ref{sec:theorem}.

\section{The finite certificate model}\label{sec:certificate-model}

\begin{definition}[Finite certificate]
A finite certificate for the threshold \(\target\) consists of the following finite data.
\begin{enumerate}
\item A finite family \(\Bcover\) of parameter domains in the normalized placement space.
\item A cover assertion \(\Oadm\subseteq\bigcup_{B\in\Bcover}B\).
\item For each \(B\in\Bcover\), a local value \(L_B\) satisfying \(\Area(\paramv)\ge L_B\ge\target\) on \(B\).
\item Individual evidence records certifying the local assertions.
\item Outward-rounded interval enclosures for the determinant and area estimates used by the certificate.
\item A final aggregation check connecting the local assertions to the convex lower-bound statement.
\end{enumerate}
\end{definition}

\subsection{Certificate cover}

Let \(\Bcover\) be the finite cover supplied by the certificate.  It is a finite family of parameter domains in the five-dimensional normalized placement space, and each domain \(B\in\Bcover\) is represented by interval bounds on the placement parameters.  Its mathematical role is the inclusion
\[
\Oadm\subseteq \bigcup_{B\in\Bcover}B.
\]
The symbol \(\Bcover\) always denotes the cover family, while \(B\) denotes one of its domains.

\subsection{Local lower-bound assertions}

For each \(B\in\Bcover\), the local assertion has the form
\[
\Area(\paramv)\ge L_B\qquad(\paramv\in B),
\]
where the certified lower endpoint also satisfies
\[
L_B\ge\target.
\]
Section~\ref{sec:local-sources} separates the cover according to how the value \(L_B\) is certified; both subfamilies enter the final proof through the same inequality.

\subsection{Evidence records and interval enclosures}

Each local assertion is accompanied by finite evidence.  The record families are summarized in Table~\ref{tab:record-families}.  The names ``directed interval'', ``tensor'', ``bridge'', and ``witness'' are record-class labels in the finite archive; their role in the proof is only the local inequality displayed in the last column.

\begin{table}[H]
\centering
\caption{Certificate record families and their local lower-bound interfaces.}
\label{tab:record-families}
\small
\begin{tabularx}{0.98\textwidth}{
    @{\hspace{0.02\textwidth}}          
    m{0.27\textwidth}                   
    @{\hspace{0.03\textwidth}}          
    X                                   
    @{\hspace{0.03\textwidth}}          
    >{\raggedleft\arraybackslash}m{0.20\textwidth}   
    @{\hspace{0.02\textwidth}}          
}
\toprule
Record family & Role in the proof & Local interface \\
\midrule
directed interval records & Direct interval lower bounds on supporting domains. & \(\Area(\paramv)\ge L_B\ge\target\) \\
tensor records & Auxiliary finite local-bound records on supporting domains. & \(\Area(\paramv)\ge L_B\ge\target\) \\
bridge records & Supporting records for residual domains. & \(\Area(\paramv)\ge L_B\ge\target\) \\
witness records & Ordered witness polygons and interval area lower bounds. & \(\Area(\paramv)\ge L_B^{\mathrm{wit}}\ge\target\) \\
\bottomrule
\end{tabularx}
\end{table}

Here \(L_B^{\mathrm{wit}}\) denotes the witness-domain lower endpoint introduced in Section~\ref{sec:local-sources}.  All supporting record families enter the proof only through certified values \(L_B\ge\target\).  The use of individual evidence records ties each local assertion in the cover to a checked record, rather than only to an aggregate count.

The numerical records use outward-rounded interval arithmetic.  Each coordinate, determinant, and shoelace expression is evaluated in an interval enclosure rounded outward, so a recorded lower endpoint is a rigorous lower bound under the stated interval model.  This is the standard principle of interval arithmetic; see Moore~\cite{moore}.

The sizes of the finite cover and the supporting evidence records are summarized in Table~\ref{tab:component-summary}.  The finite certificate conditions used in the final aggregation are listed in Table~\ref{tab:obligations}.  These tables summarize checked certificate data; the proof uses the corresponding local inequalities, not the table entries as independent assumptions.  The counts in Appendix~\ref{app:data} occur at different levels: parameter domains are cover-level objects, whereas final verification records are records used to discharge the certificate obligations.  They are not intended to be in one-to-one correspondence.  The aggregation step is isolated as Proposition~\ref{prop:finite-cover}.  We next separate the local assertions according to the mechanism that supplies \(L_B\).

\section{Sources of local lower bounds}\label{sec:local-sources}

Let \(\Fwit\subseteq\Bcover\) be the subfamily of witness domains, and set
\[
\Fsup=\Bcover\setminus\Fwit.
\]
For \(B\in\Fsup\), the value \(L_B\) is supplied by supporting interval records and the associated individual evidence records.  For \(B\in\Fwit\), the value \(L_B\) is realized by a witness lower bound \(L_B^{\mathrm{wit}}\).  The distinction between \(\Fsup\) and \(\Fwit\) concerns only the method used to certify the local lower bound; both families enter the final proof through the same inequality \(\Area(\paramv)\ge L_B\ge\target\).

\subsection{Supporting domains}

On the supporting domains \(B\in\Fsup\), each supporting record certifies an interval lower bound for \(\Area(\paramv)\) on its domain and checks that the certified lower endpoint is at least the same threshold \(\target\) used in the theorem.  The final proof uses the local inequalities attached to these records, not their aggregate counts.  A supporting local record enters the finite cover only after the inequalities \(\Area(\paramv)\ge L_B\) on \(B\) and \(L_B\ge\target\) have been checked.

\subsection{Witness domains}

On the witness domains \(B\in\Fwit\), the local value is obtained from an inner-witness polygon.  In this case the local lower bound is the witness lower bound,
\[
L_B=L_B^{\mathrm{wit}},
\qquad
\Area(\paramv)\ge L_B^{\mathrm{wit}}
\qquad(\paramv\in B),
\]
and the interval estimates of Section~\ref{sec:interval} establish \(L_B^{\mathrm{wit}}\ge\target\).  The witness-domain counts and interval lower-bound minima are summarized in Tables~\ref{tab:witness-counts} and~\ref{tab:witness-bounds}.

\section{The inner-witness polygon construction}\label{sec:witness}

We describe the geometric construction used on the witness domains.

\subsection{Witness point families}

Fix \(B\in\Fwit\).  A witness family on \(B\) is an ordered list of point functions
\[
q_1(\paramv),q_2(\paramv),\ldots,q_{m_B}(\paramv).
\]
Here \(m_B\) denotes the number of witness vertices stored for the domain \(B\).  We denote the underlying set of these witness points by
\[
Q_B(\paramv)=\{q_1(\paramv),\ldots,q_{m_B}(\paramv)\}.
\]
The containment condition is
\[
q_i(\paramv)\in X(\paramv)
\qquad(1\le i\le m_B,\ \paramv\in B).
\]
Each witness point has recorded provenance: it is a translated vertex of \(\Ttri\), a rotated-translated vertex of \(\Pfive\), or a certified point of \(\Cdisc\), meaning a point whose membership in the diameter-one disk is verified by outward-rounded interval enclosures.  The triangle witness coordinates are affine in the translation variables.  The pentagon witness coordinates are trigonometric functions of the rotation angle \(\rho\) and affine functions of the translation \(\transPent\).  In all cases, the coordinate enclosures on \(B\) are evaluated by outward-rounded interval arithmetic.

\subsection{The ordered witness polygon}

The certificate also stores a prescribed cyclic order of the witness points.  Let \(\mathcal P_B(\paramv)\) denote the compact polygonal region bounded by the ordered vertices
\[
q_1(\paramv),q_2(\paramv),\ldots,q_{m_B}(\paramv).
\]
The certificate includes a convex-order audit for this prescribed order.  In this context, the audit is a finite check that fixes the cyclic boundary order of the recorded vertices and rules out order changes or crossing degeneracies on the whole parameter domain.  In the archive this audit is stored as finite sign and separation data attached to the same parameter domain \(B\).  The determinant estimates of Section~\ref{sec:interval} are not used alone to infer global simplicity from consecutive turns; rather, they certify the stability of this audited order.  Together, the convex-order audit and the determinant estimates certify that the ordered polygonal region is simple and positively oriented throughout \(B\).

All vertices of \(\mathcal P_B(\paramv)\), equivalently all points of \(Q_B(\paramv)\), lie in \(X(\paramv)\subseteq\Hull(\paramv)\).  Since \(\Hull(\paramv)\) is convex and \(\mathcal P_B(\paramv)\) is the polygonal region bounded by these ordered vertices,
\[
\mathcal P_B(\paramv)\subseteq\conv Q_B(\paramv)\subseteq\Hull(\paramv).
\]

\subsection{The inner-witness lower-bound lemma}

\begin{lemma}[Inner-witness lower bound]\label{lem:inner-witness}
Suppose that the ordered witness polygonal region \(\mathcal P_B(\paramv)\) is contained in \(\Hull(\paramv)\) for every \(\paramv\in B\), and that
\[
\area(\mathcal P_B(\paramv))\ge L_B^{\mathrm{wit}}
\qquad(\paramv\in B).
\]
Then
\[
\Area(\paramv)\ge L_B^{\mathrm{wit}}
\qquad(\paramv\in B).
\]
\end{lemma}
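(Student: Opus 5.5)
The plan is to use monotonicity of Lebesgue measure, applied pointwise in \(\paramv\); no uniformity over \(B\) is needed beyond what the hypotheses already give.

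Fix an arbitrary \(\paramv\in B\). By the first hypothesis, \(\mathcal P_B(\paramv)\subseteq\Hull(\paramv)\). Both sets are Lebesgue measurable: \(\mathcal P_B(\paramv)\) is a compact polygonal region, hence closed, and \(\Hull(\paramv)\) is a closed convex set, being the closed convex hull of the compact set \(X(\paramv)\). Monotonicity of planar Lebesgue measure then gives
\[
\area(\mathcal P_B(\paramv))\le\area(\Hull(\paramv))=\Area(\paramv),
\]
where the equality is the definition of \(\Area\) from Section~2.

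Combining this with the second hypothesis \(\area(\mathcal P_B(\paramv))\ge L_B^{\mathrm{wit}}\) gives \(\Area(\paramv)\ge L_B^{\mathrm{wit}}\). Since \(\paramv\in B\) was arbitrary, the inequality holds on all of \(B\).

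The only point that needs care is that \(\area(\mathcal P_B(\paramv))\) means the measure of the region itself. When the certificate later bounds this quantity by a shoelace sum, that identification relies on the region being simple and positively oriented, which the convex-order audit of Section~5.2 is designed to certify. Within the lemma, however, the area bound is a hypothesis, so that issue does not arise here; the argument uses only measurability and monotonicity.
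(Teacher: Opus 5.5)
Your proof is correct and matches the paper's argument: fix \(\paramv\in B\), then the containment \(\mathcal P_B(\paramv)\subseteq\Hull(\paramv)\) and monotonicity of area give \(\Area(\paramv)=\area(\Hull(\paramv))\ge\area(\mathcal P_B(\paramv))\ge L_B^{\mathrm{wit}}\). Your measurability remark is a harmless detail that the paper leaves implicit.
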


\begin{proof}
For \(\paramv\in B\), the containment \(\mathcal P_B(\paramv)\subseteq\Hull(\paramv)\) and area monotonicity give
\[
\Area(\paramv)=\area(\Hull(\paramv))
\ge \area(\mathcal P_B(\paramv))
\ge L_B^{\mathrm{wit}}.
\]
\end{proof}

\section{Interval orientation and shoelace estimates}\label{sec:interval}

\subsection{Coordinate enclosures}

Write
\[
q_i(\paramv)=(x_i(\paramv),y_i(\paramv)),
\qquad 1\le i\le m_B.
\]
The certificate evaluates interval enclosures for these coordinate functions on each witness domain \(B\).  The enclosures are rounded outward at every operation.  Consequently, all determinant and shoelace lower endpoints used below are conservative.

\subsection{Cyclic order and determinant lower bounds}

For consecutive triples in the prescribed cyclic order stored in the certificate data, define
\[
\Delta_{B,i}(\paramv)
=
\detv\bigl(q_{i+1}(\paramv)-q_i(\paramv),\,
q_{i+2}(\paramv)-q_{i+1}(\paramv)\bigr),
\]
with indices taken cyclically.  The interval evaluation supplies lower endpoints \(\underline\Delta_{B,i}\) such that
\[
\underline\Delta_{B,i}\le \Delta_{B,i}(\paramv)
\qquad(\paramv\in B).
\]
The convex-order audit fixes the cyclic order.  For the estimates required by that prescribed order, the recorded condition is
\[
\underline\Delta_{B,i}>0.
\]
These determinant conditions certify that the prescribed order remains positively oriented on \(B\).  They are used together with the convex-order audit, not as an independent global-simplicity theorem.  No shoelace lower bound is accepted unless the convex-order audit and the corresponding determinant lower bounds have both been certified.

\subsection{Shoelace lower endpoints}

For the certified order, set
\[
S_B(\paramv)=
\sum_{i=1}^{m_B}
\bigl(x_i(\paramv)y_{i+1}(\paramv)-x_{i+1}(\paramv)y_i(\paramv)\bigr),
\qquad q_{m_B+1}=q_1
\]
by cyclic convention; equivalently, \((x_{m_B+1},y_{m_B+1})=(x_1,y_1)\).  Since \(\mathcal P_B(\paramv)\) is certified to be a simple positively oriented polygonal region, the shoelace formula gives
\[
\area(\mathcal P_B(\paramv))=\frac12 S_B(\paramv).
\]
The interval computation supplies a lower endpoint \(\underline S_B\) satisfying
\[
\underline S_B\le S_B(\paramv)
\qquad(\paramv\in B).
\]
Thus
\[
\area(\mathcal P_B(\paramv))\ge \frac12\underline S_B.
\]
If \(\frac12\underline S_B\ge\target\), Lemma~\ref{lem:inner-witness} gives \(\Area(\paramv)\ge\target\) on \(B\).  Thus every witness domain carries a verified local inequality \(\Area(\paramv)\ge\target\), placing it in the local framework of Section~3.2.

\section{The certified lower-bound theorem}\label{sec:theorem}

\subsection{Finite-cover proposition}

\begin{proposition}[Finite-cover implication]\label{prop:finite-cover}
Let \(\Bcover\) be a finite family of parameter domains such that
\[
\Oadm\subseteq\bigcup_{B\in\Bcover}B.
\]
Assume that for every \(B\in\Bcover\) there is a certified number \(L_B\) satisfying
\[
\Area(\paramv)\ge L_B\ge\target
\qquad(\paramv\in B).
\]
Then \(\Area(\paramv)\ge\target\) for every \(\paramv\in\Oadm\).
\end{proposition}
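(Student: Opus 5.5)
The plan is a direct pointwise argument: fix an arbitrary admissible placement, locate it in some domain of the cover, and then apply that domain's local inequality. No analytic property of \(\Area\) is needed here. Continuity, convexity in the parameters, and so on play no role, because the hypotheses already give the local bound at every point of each domain.

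First, I fix an arbitrary \(\paramv\in\Oadm\). By the cover inclusion \(\Oadm\subseteq\bigcup_{B\in\Bcover}B\), there is at least one domain \(B_0\in\Bcover\) with \(\paramv\in B_0\). Finiteness of \(\Bcover\) is not needed for this step. It matters only for the certificate to be checkable: finitely many local records can be verified, and the induced quantity \(\min_{B\in\Bcover}L_B\) is a genuine minimum, which is still \(\ge\target\).

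Second, I apply the hypothesis for \(B_0\) at the point \(\paramv\). This gives \(\Area(\paramv)\ge L_{B_0}\), and since \(L_{B_0}\ge\target\), transitivity yields \(\Area(\paramv)\ge\target\). As \(\paramv\) was arbitrary, the conclusion holds on all of \(\Oadm\).

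I would add a remark that the local hypothesis is supplied uniformly for both subfamilies \(\Fsup\) and \(\Fwit\). For \(B\in\Fwit\) it comes from Lemma~\ref{lem:inner-witness} together with the shoelace estimate \(\tfrac12\underline S_B\ge\target\) of Section~\ref{sec:interval}. For \(B\in\Fsup\) it comes from the supporting records. Hence the proposition applies with no case distinction.

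There is no real obstacle. The only point requiring care is that the local inequality must hold at every point of \(B\), including boundary points shared by overlapping domains. The hypothesis is stated for all \(\paramv\in B\), so any choice of \(B_0\) works and overlaps cause no ambiguity.
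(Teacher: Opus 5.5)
Your proof is correct and is essentially the paper's argument: fix \(\paramv\in\Oadm\), pick a domain \(B\in\Bcover\) containing it by the cover inclusion, and chain \(\Area(\paramv)\ge L_B\ge\target\). Your side remarks, that finiteness plays no role in the implication itself and that overlapping domains cause no ambiguity, are accurate.
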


\begin{proof}
Let \(\paramv\in\Oadm\).  By the finite-cover relation, \(\paramv\in B\) for at least one \(B\in\Bcover\).  The local certificate on that domain gives
\[
\Area(\paramv)\ge L_B\ge\target.
\]
Since \(\paramv\) was arbitrary, the inequality holds on \(\Oadm\).
\end{proof}

\subsection{Verification of the bundled certificate}

The labels OB-A--OB-F refer to the finite proof obligations listed in Table~\ref{tab:obligations}.

\begin{proposition}[Verification of the bundled certificate]\label{prop:bundled-certificate}
The certificate archive accompanying this paper satisfies the finite certificate conditions OB-A--OB-F listed in Table~\ref{tab:obligations} for the threshold \(\target\).
\end{proposition}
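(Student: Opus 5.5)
The verification is a deterministic replay of the finite archive, organized so that each obligation OB-A--OB-F is discharged by a finite check whose logical content is one of the inequalities already isolated in Sections~3--6. I will treat the obligations in the order in which they feed Proposition~\ref{prop:finite-cover}.

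\textbf{Cover obligation.} The archive stores the domains \(B\in\Bcover\) as boxes with interval bounds in \((\rho,x_3,y_3,x_5,y_5)\). To check \(\Oadm\subseteq\bigcup_{B}B\), I would describe \(\Oadm\) by its recorded Brass--Sharifi inequalities and run a recursive bisection of a bounding box of \(\Oadm\). A subbox is discharged if it is contained in a single recorded \(B\), using exact rational comparison of endpoints. It is also discharged if interval evaluation of some admissibility constraint shows that it misses \(\Oadm\). Otherwise it is split further. Termination of this replay with every leaf discharged certifies the cover. To make the check purely combinatorial, the box endpoints will be stored as exact dyadic rationals.

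\textbf{Local obligations.} For \(B\in\Fsup\), the replay re-evaluates each directed-interval, tensor, or bridge record with outward rounding, which follows Moore's principle. It confirms a lower endpoint \(L_B\le\inf_B\Area\) and checks \(L_B\ge\target\) by exact decimal comparison against \(0.833\).

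For \(B\in\Fwit\), the replay proceeds as follows.
\begin{enumerate}
\item Check provenance and containment \(q_i(\paramv)\in X(\paramv)\). Triangle and pentagon vertices satisfy this by construction. Disk points are checked via an interval upper bound on \(x_i^2+y_i^2\) that stays at most \(1/4\).
\item Replay the convex-order audit: check the recorded sign and separation data, and check \(\underline\Delta_{B,i}>0\) for every cyclic \(i\). Positive turns together with total turning \(2\pi\) give a convex, hence simple, positively oriented polygon. The separation data rule out the polygon winding more than once.
\item Compute \(\underline S_B\) and verify \(\tfrac12\underline S_B\ge\target\).
\end{enumerate}
Then \(\mathcal P_B(\paramv)\subseteq\Hull(\paramv)\), and Lemma~\ref{lem:inner-witness} yields the local inequality.

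\textbf{Aggregation obligation.} The final check confirms three things: every domain in \(\Bcover\) has an attached, checked record; \(\Fwit\cup\Fsup=\Bcover\); and the minimum of all \(L_B\) is at least \(\target\). This matches the hypotheses of Proposition~\ref{prop:finite-cover} verbatim.

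\textbf{Main obstacle.} I expect the hardest step to be the witness-order audit. Positive consecutive determinants alone do not exclude a polygon that turns through \(4\pi\). I would first try to certify that the argument of each edge direction \(q_{i+1}-q_i\) lies in a recorded interval on \(B\), and that these intervals are strictly increasing and sweep less than \(2\pi+\) the first one. This gives total turning exactly \(2\pi\), hence convexity, and hence a valid shoelace identity.

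A secondary obstacle is the cover check near curved admissibility boundaries in \(\rho\). There I would rely on conservative exclusion: it suffices to cover a superset of \(\Oadm\).
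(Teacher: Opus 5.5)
Your proposal is sound, but it verifies the archive at a different level than the paper does.

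The paper's proof recomputes no geometry. Its verification commands check the SHA256 manifest, the required archive members, schema and row-count invariants, the recorded per-record pass fields, positivity of recorded surplus fields, zero final blockers, and agreement of the final aggregation gate with the convex claim. It then takes these finite predicates as the certificate-side representatives of OB-A--OB-F, explicitly without regenerating the geometric search.

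You instead propose a semantic replay that re-derives the content of each obligation. This means an independent bisection check of $\Oadm\subseteq\bigcup_{B\in\Bcover}B$ against the admissibility inequalities, and outward-rounded re-evaluation of the supporting bounds. On the witness domains it means recomputed containment (with $x_i^2+y_i^2\le 1/4$ for disk points), determinant endpoints and shoelace endpoints.

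The paper's route buys a cheap, deterministic integrity check decoupled from the geometric code. However, a recorded pass or surplus field discharges OB-A--OB-C and OB-E only to the extent that one trusts the pipeline that wrote it. Your route actually discharges them.

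Your order audit is also sharper than the paper's appeal to unspecified ``sign and separation data''. You correctly note that positive consecutive determinants leave the turning number undetermined. Certifying strictly increasing edge-direction argument intervals inside a window of length $2\pi$ forces turning number one. With strict left turns, that gives a convex, simple, positively oriented polygon and hence the shoelace identity.

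Two caveats remain.
\begin{itemize}
\item On $\Fsup$ your plan is no more explicit than the paper's. You do not say how re-evaluating a directed-interval, tensor, or bridge record yields a lower bound for the hull area.
\item Your two cover-discharge rules are sound, but they need not terminate where a box face lies exactly on a closed part of $\partial\Oadm$. A cell meeting the boundary there is neither inside one box nor certifiably disjoint from $\Oadm$. So ``covering a superset'' helps only if the archived boxes actually overhang $\Oadm$; otherwise you need a rule that tests the intersection of the cell with $\Oadm$ directly.
\end{itemize}
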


\begin{proof}
The accompanying finite records are organized into four components summarized in Tables~\ref{tab:component-summary}--\ref{tab:witness-bounds}:
\begin{itemize}
\item per-record evidence, which ties supporting local records to individual evidence;
\item construction records, which provide the supporting construction and rounding checks;
\item witness records, which provide point containment, convex-order audit, determinant lower bounds, and shoelace lower endpoints on the witness domains;
\item final verification records, which connect the cover-level proof obligations to the stated convex claim boundary.
\end{itemize}
The public verification commands check the fixed SHA256 manifest and the following recorded predicates: required archive members; schema and row-count invariants; per-record pass fields; positivity of recorded surplus fields where they occur; zero final blockers; and agreement of the final aggregation gate with the stated convex claim boundary.  These are finite-record checks on the archived certificate data; they do not regenerate the geometric search that produced the records.  These finite predicates are the certificate-side representatives of OB-A--OB-F, and the bundled records satisfy them for \(\target\).
\end{proof}

\subsection{Certified lower-bound theorem}

\begin{theorem}[Certified lower-bound theorem]\label{thm:certificate}
For the bundled finite certificate, one has
\[
\Area(\paramv)\ge\target
\qquad(\paramv\in\Oadm).
\]
\end{theorem}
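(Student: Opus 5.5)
The theorem follows by combining Proposition~\ref{prop:bundled-certificate} with Proposition~\ref{prop:finite-cover}. The plan has three steps: extract the cover and the local inequalities from the verified certificate, justify the local inequality on every domain, and then aggregate.

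\emph{First step.} By Proposition~\ref{prop:bundled-certificate}, the bundled archive satisfies obligations OB-A--OB-F for \(\target\). In particular it supplies a finite family \(\Bcover\) with \(\Oadm\subseteq\bigcup_{B\in\Bcover}B\). It also supplies, for each \(B\in\Bcover\), a recorded value \(L_B\) with \(L_B\ge\target\). I then split \(\Bcover=\Fsup\cup\Fwit\) as in Section~\ref{sec:local-sources}.

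\emph{Second step.} I verify the hypothesis \(\Area(\paramv)\ge L_B\ge\target\) on each domain, separately for the two subfamilies.

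For \(B\in\Fsup\), the local inequality \(\Area(\paramv)\ge L_B\) on \(B\) is exactly what the supporting interval records certify under outward-rounded arithmetic. I take it from the verified records directly.

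For \(B\in\Fwit\), the argument has four parts.
\begin{enumerate}
\item Each witness point \(q_i(\paramv)\) lies in \(X(\paramv)\). This is the recorded containment, certified by interval enclosures of the triangle, pentagon, or disk coordinates on all of \(B\).
\item The convex-order audit, together with \(\underline\Delta_{B,i}>0\), makes \(\mathcal P_B(\paramv)\) simple and positively oriented for every \(\paramv\in B\).
\item Convexity of \(\Hull(\paramv)\) then gives \(\mathcal P_B(\paramv)\subseteq\conv Q_B(\paramv)\subseteq\Hull(\paramv)\).
\item The shoelace formula and the lower endpoint \(\underline S_B\) give \(\area(\mathcal P_B(\paramv))\ge\tfrac12\underline S_B=:L_B^{\mathrm{wit}}\ge\target\).
\end{enumerate}
Lemma~\ref{lem:inner-witness} then yields \(\Area(\paramv)\ge L_B^{\mathrm{wit}}\) on \(B\).

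\emph{Third step.} Every domain now satisfies the hypothesis of Proposition~\ref{prop:finite-cover}. Applying it gives \(\Area(\paramv)\ge\target\) for all \(\paramv\in\Oadm\).

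The main obstacle is the second step for the witness domains. The positivity of consecutive determinants does not by itself imply that the polygon is simple, so the argument relies on the convex-order audit to exclude crossings uniformly over \(B\). If I had to make this rigorous independently of the archive, I would try two things.
\begin{enumerate}
\item Show that strictly positive turns, together with a total turning angle of \(2\pi\), give a convex polygon. The turning angle can be certified from the audited angular order of the edge directions.
\item Alternatively, avoid simplicity entirely. For any closed polygon whose vertices lie in a convex set, \(\tfrac12 S_B\) is bounded by the area of the convex hull of its vertices, provided the winding number is at most one. Monotonicity then gives \(\Area(\paramv)\ge\tfrac12 S_B(\paramv)\).
\end{enumerate}
A secondary point is that the cover assertion \(\Oadm\subseteq\bigcup B\) must be checked against the recorded description of \(\Oadm\). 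I would take this as part of OB-A, that is, as part of Proposition~\ref{prop:bundled-certificate}.
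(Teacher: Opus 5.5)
Your proposal is correct and follows essentially the same route as the paper. Proposition~\ref{prop:bundled-certificate} supplies the cover and local certificates, supporting domains are handled by their records, witness domains by Lemma~\ref{lem:inner-witness} together with the audited orientation and shoelace bounds, and Proposition~\ref{prop:finite-cover} aggregates (your side remarks on replacing the convex-order audit go beyond what the paper's proof uses).
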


\begin{proof}
By Proposition~\ref{prop:bundled-certificate}, the bundled certificate satisfies the cover condition and the local lower-bound conditions.  For \(B\in\Fsup\), the supporting records give \(\Area(\paramv)\ge L_B\ge\target\) on \(B\).  For \(B\in\Fwit\), Lemma~\ref{lem:inner-witness} and the interval orientation and shoelace conditions give \(\Area(\paramv)\ge L_B^{\mathrm{wit}}\ge\target\) on \(B\).  Thus every domain in \(\Bcover\) satisfies the local hypothesis of Proposition~\ref{prop:finite-cover}, and the theorem follows.
\end{proof}

\subsection{Convex universal-cover consequence}

\begin{corollary}[Convex universal-cover consequence]\label{cor:convex}
The bundled finite certificate gives
\[
\acvx\ge\target=0.833.
\]
\end{corollary}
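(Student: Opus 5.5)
The plan is to combine Theorem~\ref{thm:certificate} with the input normalization principle of Section~2.3 and then pass to the infimum defining \(\acvx\). Fix an arbitrary \(K\in\Ucvx\). Each of \(\Cdisc\), \(\Ttri\), \(\Pfive\) has diameter one, so \(K\) contains a congruent copy of each. The first step invokes the Brass--Sharifi normalization principle, which we accept as an input theorem. It provides a Euclidean isometry \(g\) and a parameter \(\paramv\in\Oadm\) such that \(\Hull(\paramv)\subseteq g(K)\). Convexity of \(K\) is used here: \(g(K)\) contains the three placed copies, hence their union \(X(\paramv)\), hence the closed convex hull. If \(K\) is not closed, we replace it by its closure, which has the same area because \(K\) is convex, and which still contains the closed hull.

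In the second step, Euclidean isometries preserve Lebesgue area, and area is monotone under inclusion. Together these give
\[
\area(K)=\area(g(K))\ge\area(\Hull(\paramv))=\Area(\paramv).
\]
Theorem~\ref{thm:certificate} applies because \(\paramv\in\Oadm\), and gives \(\Area(\paramv)\ge\target\). Therefore \(\area(K)\ge 0.833\) for every \(K\in\Ucvx\). Taking the infimum over \(K\in\Ucvx\) yields \(\acvx\ge\target\). If \(\Ucvx\) were empty the infimum would be \(+\infty\), but Gibbs' construction shows it is nonempty. In either case the inequality holds.

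I expect no real obstacle inside this argument. All the substantive content sits in Theorem~\ref{thm:certificate}, which rests on Proposition~\ref{prop:bundled-certificate}, and in the Brass--Sharifi reduction, which is imported rather than reproved. The one point that needs care is making sure the normalization genuinely maps every convex universal cover to some parameter in \(\Oadm\) and not just to a parameter in the unreduced placement space. This includes the symmetry reductions on the pentagon angle and any reflections. I would handle it by stating explicitly that the isometry \(g\) may be orientation-reversing, since area is preserved in that case as well. I would also note that the symmetry group of the pentagon lets its rotation be taken in the fundamental interval, exactly as assumed in Section~2.2.
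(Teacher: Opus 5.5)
Your proposal is correct and follows the paper's proof. You invoke the Section~2.3 normalization principle to obtain \(\paramv\in\Oadm\) with \(\Hull(\paramv)\) contained in an isometric image of \(K\), use isometry invariance and monotonicity of area, apply Theorem~\ref{thm:certificate}, and take the infimum over \(\Ucvx\). Your extra remarks on closure, nonemptiness of \(\Ucvx\), and orientation-reversing isometries are harmless; the closure step is not needed, since the convex hull of the three compact test sets is already compact.
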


\begin{proof}
Let \(K\in\Ucvx\).  By the normalization principle of Section~2.3, choose \(\paramv\in\Oadm\) such that \(\Hull(\paramv)\subseteq K\).  Then
\[
\area(K)\ge\area(\Hull(\paramv))=\Area(\paramv).
\]
Theorem~\ref{thm:certificate} gives \(\Area(\paramv)\ge\target\).  Hence \(\area(K)\ge\target\).  Taking the infimum over \(K\in\Ucvx\) yields the desired inequality.
\end{proof}

\section{Conclusion}

The finite certificate proves \(\Area(\paramv)\ge\target\) on the admissible normalized three-test-set domain, with \(\target=0.833\).  The result is obtained in the same convex three-test-set framework as the Brass--Sharifi threshold \(0.832\).

The proof proceeds by finite covering.  Each parameter domain carries either a supporting local record or an ordered inner-witness polygonal record with interval orientation and shoelace lower bounds.  The finite-cover proposition then yields the convex universal-cover consequence \(\acvx\ge0.833\).

The statement proved here is the convex lower-bound consequence obtained within the stated Brass--Sharifi three-test-set certificate model.  It is not a statement about the unrestricted nonconvex problem, and it is not a proof-assistant formalization.

\paragraph{Certificate archive and verification code.}
The certificate records and verification code accompanying this paper are available from the project repository.\footnote{\url{https://github.com/Sheldon-Anderson/lebesgue-universal-cover-new-lower-bound-certificate}}

\section*{Acknowledgements}
The author used software-assisted editorial tools for copyediting.  All mathematical claims, computations, certificate judgments, and final text are the responsibility of the author.

\clearpage
\appendix
\numberwithin{table}{section}
\renewcommand{\arraystretch}{1.15}

\section{Finite certificate data}
\label{app:data}

\begin{table}[H]
\centering
\caption{Finite certificate components.}
\label{tab:component-summary}
\small
\begin{tabularx}{0.94\textwidth}{
    @{\hspace{0.02\textwidth}}          
    m{0.23\textwidth}                   
    @{\hspace{0.12\textwidth}}          
    >{\centering\arraybackslash}m{0.13\textwidth}   
    @{\hspace{0.02\textwidth}}          
    >{\raggedleft\arraybackslash}X      
    @{\hspace{0.02\textwidth}}          
}
\toprule
Component & Quantity & Role \\
\midrule
parameter domains & 356,816 & cover of \(\Oadm\) \\
directed interval records & 41,261 & supporting local bounds \\
tensor records & 8,751 & auxiliary local bounds \\
bridge records & 282 & residual supporting local bounds \\
witness domains & 16 & witness-domain local bounds \\
\bottomrule
\end{tabularx}
\end{table}

The labels OB-A--OB-F denote the proof obligations discharged by the final aggregation record.

\begin{table}[H]
\centering
\caption{Finite certificate conditions.}
\label{tab:obligations}
\small
\begin{tabularx}{0.97\textwidth}{
    @{\hspace{0.02\textwidth}}
    m{0.10\textwidth}                           
    @{\hspace{0.0\textwidth}}
    X                                           
    @{\hspace{0.05\textwidth}}
    X                                           
    @{\hspace{0.02\textwidth}}
}
\toprule
Group & Mathematical assertion & Certificate component / verification record \\
\midrule
OB-A & \(\Oadm\subseteq\bigcup_{B\in\Bcover}B\) & finite-cover records \\
OB-B & supporting domains satisfy local lower bounds & per-record evidence and construction audit \\
OB-C & witness domains satisfy local lower bounds & witness construction \\
OB-D & every used local record has individual evidence & per-record evidence \\
OB-E & interval endpoints clear \(\target\) & construction, witness, and final verification interval records \\
OB-F & final aggregation matches the convex claim & final verification, final aggregation, and claim-boundary records \\
\bottomrule
\end{tabularx}
\end{table}

\begin{table}[H]
\centering
\caption{Final verification record summary.}
\label{tab:final-adjudication-summary}
\small
\begin{tabularx}{0.97\textwidth}{
    @{\hspace{0.02\textwidth}}          
    m{0.26\textwidth}                   
    @{\hspace{0.06\textwidth}}          
    >{\centering\arraybackslash}m{0.12\textwidth}   
    @{\hspace{0.02\textwidth}}          
    >{\raggedleft\arraybackslash}X      
    @{\hspace{0.02\textwidth}}          
}
\toprule
Record class & Quantity & Role \\
\midrule
frozen certificate records & 3,411 & final record family used by the final verification \\
final-replay records & 2,790 & records requiring only final replay \\
event-aware interval records & 493 & records certified by event-aware interval replay \\
witness/source records & 117 & witness and source-reconstruction records \\
thin extra records & 11 & near-critical records retained separately \\
final blockers & 0 & unresolved records after final verification \\
\bottomrule
\end{tabularx}
\end{table}
\FloatBarrier

\clearpage
\section{Witness-domain interval summaries}
\label{app:inner-witness}

\begin{table}[H]
\centering
\caption{Witness-domain counts.}
\label{tab:witness-counts}
\small
\begin{tabularx}{0.72\textwidth}{
    @{\hspace{0.02\textwidth}}          
    >{\raggedright\arraybackslash}X     
    @{\hspace{0.03\textwidth}}          
    >{\raggedleft\arraybackslash}m{0.22\textwidth}   
    @{\hspace{0.02\textwidth}}          
}
\toprule
Quantity & Value \\
\midrule
witness domains & 16 \\
accepted terminal subdomains & 140 \\
unresolved terminal subdomains & 0 \\
witness point incidences & 2,112 \\
point-containment certificates & 2,112 / 2,112 \\
\bottomrule
\end{tabularx}
\end{table}

\begin{table}[H]
\centering
\caption{Interval lower-bound summaries on the witness domains.}
\label{tab:witness-bounds}
\small
\begin{tabularx}{0.82\textwidth}{
    @{\hspace{0.02\textwidth}}          
    >{\raggedright\arraybackslash}X     
    @{\hspace{0.03\textwidth}}          
    >{\raggedleft\arraybackslash}m{0.24\textwidth}   
    @{\hspace{0.02\textwidth}}          
}
\toprule
Quantity & Value \\
\midrule
minimum witness-domain area bound & \(0.8642876791\) \\
minimum excess over \(\target\) & \(0.0312876791\) \\
minimum orientation determinant lower endpoint & \(3.6637\times10^{-5}\) \\
\bottomrule
\end{tabularx}
\end{table}

\noindent The minima in Table~\ref{tab:witness-bounds} are taken only over the witness domains.  They are not asserted to be the global infimum of \(\Area(\paramv)\) over \(\Oadm\); the theorem uses only that every accepted local record clears the threshold \(\target\).
\FloatBarrier

\end{document}

%% file: macros.tex
\newcommand{\R}{\mathbb{R}}
\newcommand{\Ucvx}{\mathcal{U}_{\mathrm{cvx}}}
\newcommand{\acvx}{\alpha_{\mathrm{cvx}}}
\newcommand{\Cdisc}{C}
\newcommand{\Ttri}{T}
\newcommand{\Pfive}{P_5}
\newcommand{\paramv}{v}
\newcommand{\transTri}{u_3}
\newcommand{\transPent}{u_5}
\newcommand{\Hull}{H}
\newcommand{\Area}{A}
\newcommand{\Oadm}{\Omega_{\mathrm{adm}}}
\newcommand{\Bcover}{\mathcal{F}}
\newcommand{\Fwit}{\mathcal{F}_{\mathrm{wit}}}
\newcommand{\Fsup}{\mathcal{F}_{\mathrm{sup}}}
\newcommand{\target}{\tau}
\newcommand{\conv}{\operatorname{conv}}
\newcommand{\area}{\operatorname{area}}
\newcommand{\diam}{\operatorname{diam}}
\newcommand{\detv}{\det}

%% file: figures/fig_placement.tex
\begin{figure}[H]
\centering
\begin{tikzpicture}[scale=4.25, line cap=round, line join=round, >=Latex, every node/.style={font=\small}]
  \coordinate (O) at (0,0);
  \coordinate (Tctr) at (0.14,0.22);
  \coordinate (Pctr) at (0.52,-0.18);

  \coordinate (A) at (-0.22,0.10);
  \coordinate (B) at (0.40,0.10);
  \coordinate (Ctop) at (0.09,0.64);

  \foreach \i in {0,...,4} {
    \coordinate (P\i) at ({0.52+0.34*cos(72*\i+30)}, {-0.18+0.34*sin(72*\i+30)});
  }

  \def\HullTopAngle{139.2063}
  \def\HullBotAngle{252.1517}
  \coordinate (HtopDisk) at ({0.35*cos(\HullTopAngle)}, {0.35*sin(\HullTopAngle)});
  \coordinate (HbotDisk) at ({0.35*cos(\HullBotAngle)}, {0.35*sin(\HullBotAngle)});

  \def\PlacementHullPath{%
    (HtopDisk)
    -- (Ctop)
    -- (P0)
    -- (P4)
    -- (P3)
    -- (HbotDisk)
    arc[start angle=\HullBotAngle,end angle=\HullTopAngle,radius=0.35]%
  }

  \draw[->, gray!55] (-0.78,-0.70) -- (1.08,-0.70) node[right, font=\scriptsize] {$x$};
  \draw[->, gray!55] (-0.78,-0.70) -- (-0.78,0.90) node[above, font=\scriptsize] {$y$};

  \begin{scope}[on background layer]
    \fill[gray!5] \PlacementHullPath -- cycle;
  \end{scope}

  \fill[gray!6] (O) circle (0.35);
  \draw[thick] (O) circle (0.35);
  \draw[thin] (-0.35,0)--(0.35,0);
  \fill (O) circle (0.012);
  \node[anchor=north east] at (O) {$O$};
  \node[anchor=east] at (-0.12,-0.19) {$C$};

  \fill[gray!24] (A)--(B)--(Ctop)--cycle;
  \draw[thick] (A)--(B)--(Ctop)--cycle;
  \fill (Tctr) circle (0.011);
  \draw[->, thick] (O) -- (Tctr);
  \node[font=\scriptsize, anchor=south west] at (0.11,0.21) {$u_3$};
  \node[anchor=west, align=left, font=\scriptsize] at (-0.05,0.40) {$T+u_3$};

  \fill[gray!38] (P0)--(P1)--(P2)--(P3)--(P4)--cycle;
  \draw[thick] (P0)--(P1)--(P2)--(P3)--(P4)--cycle;
  \fill (Pctr) circle (0.011);
  \draw[->, thick] (O) -- (Pctr);
  \node[font=\scriptsize, anchor=north west] at (0.40,-0.18) {$u_5$};
  \node[anchor=west, align=left, font=\scriptsize] at (0.35,-0.37) {$R_\rho P_5+u_5$};

  \draw[->, thick] (0.52,-0.18) ++(0.19,0) arc[start angle=0,end angle=30,radius=0.19];
  \node[font=\scriptsize] at (0.75,-0.06) {$\rho$};

  \draw[very thick, dashed, gray!80] \PlacementHullPath -- cycle;
  \node[align=center, font=\scriptsize] at (-0.3,0.45) {$H(v)$};

\end{tikzpicture}

\caption{Schematic normalized placement of the three Brass--Sharifi test sets. The disk $\Cdisc$ is fixed at the origin $O$, the triangle is translated by $\transTri=(x_3,y_3)$, and the pentagon is rotated counterclockwise about the origin by $\rho$ and then translated by $\transPent=(x_5,y_5)$. The enclosing curve represents $\Hull(\paramv)=\conv(\Cdisc\cup(\Ttri+\transTri)\cup(R_\rho\Pfive+\transPent))$.}
\label{fig:placement}
\end{figure}